\newtheorem{theorem}{Theorem}[section]
\newtheorem{theorem*}{Theorem}
\newtheorem{proposition}[theorem]{Proposition}
\newtheorem{proposition*}[theorem*]{Proposition}
\newtheorem{corollary}[theorem]{Corollary}
\newtheorem{corollary*}[theorem*]{Corollary}
\newtheorem{definition}[theorem]{Definition}
\newtheorem{note*}[theorem*]{Note}
\theoremstyle{remark}
\newtheorem{remark*}[theorem*]{Remark}
\newcommand{\EE}{{\mathbb E}}
\newcommand{\NN}{{\mathbb N}}
\title{Three variations of Heads or Tails Game for Bitcoin}
\subjclass[2010]{68M01, 60G40, 91A60.}
\keywords{Bitcoin; proof-of-work; Nakamoto consensus ; Blockchain.}
\author[C. Grunspan]{Cyril Grunspan}
\address{Cyril Grunspan\newline{}\indent L\'eonard de Vinci P\^ole Univ, Finance Lab\newline{}\indent Paris, France, }
\email{cyril.grunspan@devinci.fr}
\author[R. P\'{e}rez-Marco]{Ricardo P\'{e}rez-Marco}
\address{Ricardo P\'{e}rez-Marco\newline{}\indent CNRS, IMJ-PRG, Univ. Paris Cit\'e \newline{}\indent Paris, France}
\email{ricardo.perez.marco@gmail.com}
\address{\tiny Author's Bitcoin Beer Address (ABBA)\footnote{\tiny Send some anonymous and moderate satoshis to support our research at the pub.}:
1KrqVxqQFyUY9WuWcR5EHGVvhCS841LPLn} 
\address{\includegraphics[scale=0.2]{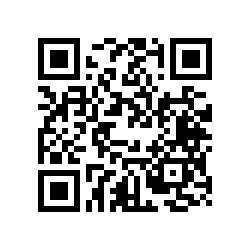}}
\begin{document}

\begin{abstract}
We present three very simple variants of the classic Heads or Tails game using chips, each of which contributes 
to our understanding of the Bitcoin protocol. The first variant addresses the issue of temporary Bitcoin forks, 
which occur when two miners discover blocks simultaneously. We determine the threshold at which an honest but 
temporarily ``Byzantine'' miner persists in mining on their fork to save his orphaned blocks. The second 
variant of Heads or Tails game is biased in favor of the player and helps to explain why the difficulty adjustment 
formula is vulnerable to attacks of Nakamoto's consensus. 
We derive directly and in a simple way, without relying on a Markov decision solver as was the case until now, 
the threshold beyond which a miner without connectivity finds it advantageous to adopt a deviant mining strategy on Bitcoin. 
The third variant of Heads or Tails game is unbiased and demonstrates that this issue in the Difficulty Adjustment formula 
can be fully rectified. Our results are in agreement with the existing literature that we clarify both qualitatively and quantitatively 
using very simple models and scripts that are easy to implement.
\end{abstract}

\date{May 05, 2024}

\maketitle

\section{Introduction}
The success of Bitcoin owes much to the simplicity of the Mathematics that explain its functioning 
(we refer to \cite{A14} for an introduction to the Bitcoin protocol,  and to \cite{GPM20-a} for some of the Mathematics involved). 
Drawing repeatedly on the concept of proof of work, these mathematics are essentially those arising from the game of Heads or Tails 
and particularly studied by the French mathematician and physicist of the 19th century, Siméon Poisson, whose name remains attached 
to the concept of Poisson processes. 
Thus, Poisson mathematics make it possible to calculate the exact probability of success of a 
double-spending attack \cite{GPM17, ROS14}. They also help understand why the current difficulty adjustment formula 
on Bitcoin allows so-called block withholding attacks \cite{GPM23}. In a way, the first to have thought of using a Heads or Tails 
game model to describe Bitcoin is Satoshi Nakamoto in his founding article \cite{N08}.

Considering game theory to describe certain aspects of Bitcoin is natural.
We refer the reader to a recent book by M. Warren on Bitcoin and game theory \cite{Warren2023}.

The goal of the present article is prove by direct and straightforward methods some results in the literature (including some that are not explicitly found). 
The threshold beyond 
which a miner no longer has an interest in staying honest on Bitcoin is at most equal to 32.94\% 
(and even less of course if the miner can rely on a non-zero connectivity parameter). This number is not 
found in the Bitcoin literature because most authors prefer to consider the problem where the miner's connectivity 
is by default equal to $\frac{1}{2}$ (see Subsection \ref{connectivity} for an explanation of connectivity for non-specialists). 
Nevertheless, one can in principle obtain it by running a Python implementation of the algorithm in the article \cite{SSZ17} and selecting $\gamma=0$ \cite{Mitsu}.
Also, it can be rigorously demonstrated that taking orphan blocks into account always makes the honest 
strategy the optimal one \cite{GPM23}.

We want to explain that the two problems were related. And indeed, this is easy to understand in a Heads or Tails game. 
If one does not always pay the dealer when the Flip is Heads, there is a bias in favor of the player. 
Moreover, one can intuitively understand why such a game can be made fair provided that the player is paid less in case of victory.

In this article we give a new perspective on certain old problems and we solve them with elementary tools:

$\bullet$ The first problem considered (Section \ref{temporarily}) is that of accidental forks. 
Such ``accidents'' still happen on Bitcoin despite a marked improvement in block propagation that takes 
only a few seconds to reach at least 50\% of the nodes \cite{DSN}. In this case, the network is divided until 
the discovery of a new block unambiguously defines the official blockchain. However, the creator of the orphaned block, 
who now has a block behind the official blockchain, may be tempted to start an attack and continue to mine on his orphaned block. 
We understand that there is a threshold in terms of relative hashing power beyond which this miner has an interest in 
persisting on his fork. This problem has been considered and studied in \cite{Kiayias16} under the name of ``The Immediate-Release Game''. 
The authors demonstrate two bounds for the threshold and then give a lower bound for the desired threshold. We show that 
this problem can be simply modeled using a Heads or Tails game with chips and can be solved with just a few lines of code. 

$\bullet$ The second problem addressed (Section \ref{permanent}) is the search for the highest hashrate for which 
the best mining strategy is the honest one when the connectivity is zero. In the general case, this problem 
is modeled as a Markov decision 
problem where the different states of the network are modeled by a triplet $(a,h,f)$ \cite{SSZ17}. However, when the 
connectivity is zero, the third parameter is irrelevant. States are described by just two parameters, exactly as 
in a modified Heads or Tails game with chips. The two parameters represent a number of chips for the player and the 
bank, which can eventually be converted into cash. The problem is easily solved if we require that the game end 
after a finite number of actions, which opens the way to a recurrence treatment for the search for the maximum 
expected gain under condition. In this game, the cost paid by the player-miner is proportional to the increase 
in the height of the official blockchain. So, the player-miner pays nothing if the result is favorable to him and 
he simply adds a new block to his secretly held fork. It is therefore easy to understand why the game is biased.

$\bullet$ However, if we take into account the production of orphan blocks in the difficulty adjustment formula 
(Section \ref{honesty}), which means for the player-miner having to pay each time the so-called ``difficulty'' 
function increases (which consists of the sum of official blocks and listed orphan blocks), the game is more balanced. 
Every time the player-miner wins, he must indeed pay for all the chips he crushes.

\section{Temporarily Byzantine by ``force of circumstance''}\label{temporarily}
When analyzing the security of certain systems, it is common practice in computer science to consider two very distinct 
categories of actors: honest participants, who respectfully follow the rules of protocol, and attackers. Following the 
terminology introduced by \cite{Lamport} in the study of distributed systems, the latter are called ``Byzantines''. 
In general, we don't change categories. Nevertheless, depending on the circumstances, we may occasionally be led to do so, such 
as a person who is fundamentally honest but finds a large sum of money on the street and decides to keep it for himself, without 
any effort to find the rightful owner.
We consider a simple situation where an honest miner on the Bitcoin network can be tricked into not respecting the rules 
of the protocol: the creation of a temporary fork. This is a relatively rare occurrence, but not an extremely rare one. 
According to statistical analyses carried out between 18/03/2014 and 14/06/2017, the rate of orphan block creation was 0.31{\%} 
for this period, and it is likely that this rate is even lower today thanks to the new versions of Bitcoin Core \cite{BSt, DSN}. 
We consider the case where two ``honest'' miners, each mining on the official blockchain, find a block at almost the same time. 
``Honest'' means here, as elsewhere in the article, that the miner always mines on the last block of the official blockchain and 
always immediately makes his discoveries public. In general, the first block discovered takes precedence and the second 
is considered ``orphaned'', although its terminology is imprecise. The Bitcoin wiki site prefers to speak of a 
``stale block'' \cite{BWi}. The miner who mined the second block is then drawn into a deviant posture. It is 
clearly in his interest to mine his orphan block rather than the last official block, because if he manages to mine a new 
block before the rest of the network, he will earn the reward contained in two blocks rather than just one. But then imagine 
that the other miners discover a block before he does. He must now not only catch up with the official blockchain, but also mine 
an additional block to gain the upper hand. Should he continue mining on his fork, or return to mining on the last block of the 
official blockchain, as stipulated by the Bitcoin protocol? This is an unprecedented situation for the miner, who eventually becomes 
``Byzantine'' by ``force of circumstances''.
The situation is indeed unprecedented, since it is perhaps the first time that it has been imposed on the miner, and it 
is unlikely that he will find himself in the same situation twice in a row in the future in the course of his activity. 
Furthermore, being fundamentally ``honest'', the miner, if he manages to catch up with and surpass the official 
blockchain, will benefit immediately. In particular, he will not engage in a block-witholding attack. The miner may become 
temporarily byzantine. His attack starts when he is one block behind the official blockchain and ends as soon as he gives 
up or manages to catch up and exceed the official blockchain by one block. In both cases, he resumes his position as an 
honest miner. The natural question is: Is it in his interest to continue mining on his fork, or should he abandon it and 
return to mining on the official blockchain? What is the threshold in terms of relative hash power at which an a priori ``honest'' 
miner has an interest in stubbornly mining on his fork when he is one block behind on the official blockchain? This question 
can be resolved using a simple classic coin toss, which we'll now describe. 

\subsection{A first classic variation of Heads or Tails}
This game pits a player against a bank. Over time, the player and the bank gain or lose chips (corresponding to mined blocks) and 
can exchange them for euros (corresponding to bitcoins). At any given moment, the player has three possible actions, as described below.

\begin{description}
\item[Toss] A croupier tosses a coin rigged in favor of the bank. The probability of getting Tails is $q$. 
This action costs the player $q$ euros whatever the result.
\begin{itemize}
\item If the result is Heads, the bank wins a chip.
\item  If the result is Tails, the player wins a chip.
\end{itemize}
\item[Crush] This action is only possible if the player (resp. the bank) has $a$ (resp. $h$) chips 
with $a>h$. In this case, the bank loses all its chips, the player loses $h+1$ chips but gains $h+1$ euros. This action costs the player nothing. 
\item[Abandon] The bank and the player lose all their chips. This action costs the player nothing.
\end{description}

In addition, the player has only a finite number $n$ of possible actions, and the game ends immediately as 
soon as the player uses either the Crush action or the Abandon action. The number of actions available to the player 
is therefore essentially the number of Toss actions, and the game ends as soon as the player gives up or takes advantage of the bank, corresponding to $a=h+1$. 
This is not an attack on bitcoin's difficulty adjustment formula, as a block-witholding attack might be. 
The player-miner is profoundly "honest". We therefore consider that we always have a priori $n\leq 2016$ (since a difficulty 
adjustment takes place every 2016 blocks on Bitcoin). Let's describe each action.

\begin{itemize}
\item The Toss action corresponds to the fact that the miner persists in mining on his fork despite a delay on the official blockchain. 
\item The Crush action corresponds to the fact that the miner has a secret fork enabling him to gain an advantage over the official 
blockchain. He then decides to make it public and pocket all the rewards it contains; he then naturally resumes his position as an honest miner.
\item The Abandon action means that the miner returns to mining on the last block of the official blockchain, like any honest miner.
\end{itemize}

In the temporary fork situation we're considering, a mining strategy is just the stopping time $\xi$ (bounded by 2016) that 
specifies the first instant when the miner returns to mine on the official blockchain. We denote $R(\xi)$ the income earned 
by the miner following this strategy. We need to compare $R(\xi)$ with the income the miner would have earned in $\xi$ if he had mined 
honestly all along. Given that the miner's relative hashing power is $q$, the latter quantity is worth on
average $q b\frac{\EE[\xi]}{\tau_0}$ with ${\tau_0}=10$ minutes and $b=3.125$ BTC (current value of a coinbase) plus the average 
value of transaction fees present in a block. So the key quantity for choosing the $\xi$ strategy over the honest one 
is $\EE[R(\xi)]-q b \frac{\EE[\xi]}{\tau_0}$. The $\xi$ strategy is preferable to the honest strategy if $\EE[R(\xi)]-q b \frac{\EE[\xi]}{\tau_0}>0$.
The second term $-q b \frac{\EE[\xi]}{\tau_0}$ is then interpreted as a cost. In this expression, everything happens 
as if the miner were paying $q b$ every time a block is discovered. Hence the fact that the player-miner pays a fixed cost to the croupier, 
which is $q$ whatever the result of the coin toss. The parameter $q$ is the probability that the coin will land on Tails, which corresponds 
to the miner finding a block before the honest miners.

\begin{definition}
Let $JM_1(a,h)$ be the Heads or Tails game described above, where the player starts from an initial situation 
in which he has a chip against the bank, which has h chips. Note also $E_1(a,h,n,q)$, the maximum expected pay-off 
for a player with a maximum of $n$ possible actions.
\end{definition}

\begin{proposition}
  The game $JM_1 (0, 0)$ is a fair game.
\end{proposition}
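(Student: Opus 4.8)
The plan is to exhibit the player's chip count, suitably compensated for the per-Toss cost, as a martingale, and then to observe that the terminal payoff can never exceed that chip count; optional stopping will force every strategy to have nonpositive expected payoff, while a trivial strategy already attains $0$, so the value is exactly $0$.

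First I would fix notation: let $(a_k,h_k)$ be the chip counts after $k$ Toss actions, with $a_0=h_0=0$. Since a Toss increases $a$ by one with probability $q$ and otherwise increases $h$, we have $\EE[a_{k+1}-a_k\mid\cF_k]=q$, so $M_k:=a_k-qk$ is a martingale with $M_0=0$. A strategy in $JM_1(0,0)$ is encoded by a bounded stopping time $\xi\le n$ giving the number of tosses before the player stops, followed by a single terminal action. Because Crush requires $a>h$ and pays $h+1$ euros while Abandon pays $0$, the terminal reward is $R=(h_\xi+1)\mathbf{1}_{\{a_\xi>h_\xi\}}$, and, since each toss costs $q$, the total payoff of the strategy is $R-q\xi$.

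Next I would establish the single key inequality $R\le a_\xi$ holding pointwise: if $a_\xi>h_\xi$ then $a_\xi\ge h_\xi+1=R$, and if $a_\xi\le h_\xi$ then $R=0\le a_\xi$. Applying the optional stopping theorem to $M_k$ (legitimate because $\xi\le n\le 2016$ and the increments are bounded) yields $\EE[a_\xi-q\xi]=M_0=0$. Writing $R-q\xi=(R-a_\xi)+(a_\xi-q\xi)$ and taking expectations then gives $\EE[R-q\xi]=\EE[R-a_\xi]\le 0$, so every strategy has nonpositive expected payoff.

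Finally, to see the bound is attained, the immediate-Abandon strategy (or: Toss once, then Crush at $(1,0)$ and Abandon at $(0,1)$) realises expected payoff exactly $0$, which forces equality $R=a_\xi$. Hence $E_1(0,0,n,q)=0$ for every $n$ and $q$, and the game is fair. The crux of the argument is the elementary observation $R\le a_\xi$, namely that the optimal terminal choice never cashes out more euros than the player's current chip count; the remainder is a direct optional-stopping computation, and I expect no genuine obstacle beyond checking integrability, which is automatic since the horizon $n$ is finite.
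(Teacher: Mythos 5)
Your proof is correct and takes essentially the same route as the paper: the optional-stopping identity $\EE[a_\xi]=q\,\EE[\xi]$ is precisely the paper's informal claim that the expected chips won, $J_1$, equals $qJ$ (the total cost), and your pointwise bound $R\le a_\xi$ is the paper's observation that the player can cash out at most the chips he holds. The only difference is that you make the averaging step rigorous via the bounded martingale $a_k-qk$, where the paper argues heuristically "on average".
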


\begin{proof}
  Suppose the player has chosen a strategy and let $J_1$ (resp.
  $J_2$) be the number of average chips received in total by the player (resp.  the bank) by participating in the game. 
  The coin toss  by the croupier is biased in favor of the bank, we have on average: 
  $J_1 = qJ$ and $J =  J_1 + J_2$. 
  Now $qJ$ is the total cost paid by the player in euros and  
  $J_1$ is the maximum sum in euros received by the player 
  (in the ultra-favorable  case where he succeeds in converting all his chips into  euros). 
  Therefore, the player's maximum net pay-off is zero. 
\end{proof}

However, the $JM_1(a,h)$ game with $a>0$ may be biased. By assumption, we have: 
\begin{eqnarray}
  E_1 (a, h, 0, q) & = & 0 
\end{eqnarray}
and for all $n \in \mathbb{N}^{\ast}$ and $a > h$,
\begin{eqnarray}
  E_1 (a, h, n, q) & = & a 
\end{eqnarray}
On the other hand, if $a \leqslant h$, the player-miner has the choice of continuing
mining (action \textbf{Toss}) or giving up (action
\textbf{Abandon}). So, for any $n \in \mathbb{N}^{\ast}$ and $a \leqslant
h$,

\begin{align}
  E_1 (a, h, n) & =  \max \{ 0, q \cdot E_1 (a + 1, h, n - 1, q) + (1 -
  q) \cdot E_1 (a, h + 1, n - 1, q) - q \} 
\end{align}
Below is a simple pseudo-code that accurately gives the average maximum gain $E_1 (a, h, n, q)$. 
We use the memoization principle for the sake of efficiency.

 \bigskip
{\tt

function E1(a, h, n, q, memo):
\par \hskip 10pt     if (a, h, n) in memo:
\par \hskip 20pt         return memo[(a, h, n)]

\par \hskip 10pt     if n == 0:
\par \hskip 20pt         return 0

\par \hskip 10pt     if a > h:
\par \hskip 20pt         return a

\par \hskip 10pt     memo[(a, h, n)] = max(0, q * E1(a + 1, h, n - 1, q, memo) + (1 - q) 
\par \hskip 20pt      * E1(a, h + 1, n - 1, q, memo) - q)
     
\par \hskip 10pt     return memo[(a, h, n)]
}
 \bigskip
 
 (maximum average net income $E_1$)
 \bigskip

We observe that $E_1 (1, 2, 75, 0.429056) = 4.050134694288943 \times 10^{-8} > 0$. However, we are unable 
to find $n$ such that $E_1 (1, 2, n, 0.429055) > 0$. In other words, if $q > 42, 91$\%, the game $JM_1 (1, 2, n)$ is
in favor of the player for $n$ large enough. Therefore, in the case of a
temporary fork, the minimum threshold beyond which a miner a priori honest miner
to insist on mining on his fork even though he's one block behind on the blockchain is about $42, 91$\%. 
The result is in line with the $36.1$\% and $45.5$\% bounds obtained in \cite{Kiayias16} with calculus in their section 
``The immediate-Release Game''. Moreover, in a presentation of this article given at the University of 
Crete in $2019$ the speaker (who is also one of the authors of the article) is more precise 
and states that the threshold lies between $42$\% and $43$\% (the lower bound $42$\% is also in the paper) \cite{Kiayias16, Koutsoupias19}.

\section{To be or not to be totally Byzantine?}\label{permanent}
We now consider another mining problem. At what relative hash power $q$ does it no longer make sense for a miner to be honest? 
For a miner, being honest means always mining on the last block of the official blockchain and always making any blocks 
discovered public. Not being honest means the opposite: keeping discovered blocks secret or not mining on the last block of the 
official blockchain. The problem under consideration is fundamentally different from the one previously considered. The miner is not 
an honest miner who momentarily becomes ``Byzantine'' by force of circumstances. On the contrary, he chooses his camp from the 
outset (honest or Byzantine) and never leaves it. What's more, his mining strategy is not limited in time. On the contrary, it is unlimited and repetitive. 
This problem has certainly already been solved in the general case \cite{SSZ17}. The authors recognize a Markov decision 
problem which they solve using a solver. They are then confronted with technical issues, as a priori such a solver can only be used in the 
case where the number of system states is finite. We show that it is possible to simplify this problem in the case where the connectivity 
of the miner is zero, and that it then resembles a simple Heads or Tails problem that we solve simply without using a solver. Recall that 
miner connectivity is a parameter introduced by \cite{ES14} and picked up by many authors since then. For non-specialists, let's describe 
the connectivity parameter here.

\subsection{Connectivity}\label{connectivity}
Let's imagine that a miner (an attacker) with a relative hash power $q$ has already mined a block $A$ of height $H$, but is keeping 
it secret because he is following a deviant mining strategy of block withholding. If the miner learns from one of his neighboring nodes 
that a block $B$ of the same height $H$ has just been discovered, he can retaliate by immediately making his hidden block A public. In this case, 
part of the network will learn of $A$'s existence before $B$'s, and vice versa. The network will be divided, with some honest miners 
looking for a block above $A$ and others above $B$.

Connectivity, denoted $\gamma$, is a measure of the amount of computing power the attacker diverts to mine his block A. 
Mathematically, it is defined as the probability that block $A$ will find its way onto the official blockchain, knowing 
that honest miners will find a block of height $H+1$. This is a conditional probability.

In concrete terms, in the situation described above, following the discovery of a block of height $H+1$, following on from blocks A and B, 
we find ourselves in one of the following three cases:

\begin{itemize}
\item the attacker is the first to discover a block of height $H+1$ ;
\item the rest of the network discovers a block of height $H+1$ above $A$ before the attacker;
\item the rest of the network discovers a block of height $H+1$ above $B$, before the attacker.
\end{itemize}

Note that in the second case, block $A$ is somehow saved by the honest miners who received $A$ before $B$. 
The probability of occurrence of the first event is $q$, that of the second is 
$\gamma p$ and that of the third is $(1-\gamma) p$ with $p=1-q$.

The parameter $\gamma$ measures the attacker's ability to react. If he is well connected, he will quickly 
learn of the existence of a new block before the others and announce the existence of his own hidden block to 
the rest of the network. This is a measure of his ability to create confusion in the network.

Since the network is constantly evolving, it's an illusion to believe that $\gamma$ remains constant over time. 
However, this is an assumption often made when assessing the profitability of mining strategies.

\subsection{A simplified problem when connectivity is zero}
In itself, connectivity is an attack vector that was not imagined by Satoshi Nakamoto, since it does not feature in 
his founding paper. With $\gamma=1$, a miner has no incentive to be honest. He has no interest in publishing a block he has 
just discovered. He can simply wait for another block to be discovered and react then. It is interesting to pose 
$\gamma=0$ to understand how Nakamoto's consensus can naturally be faulted without adding this attack vector. We therefore 
formulate this hypothesis ($\gamma=0$) and, within this framework, we seek to find out whether a miner has an incentive 
to behave honestly or whether there is a more profitable mining strategy. A mining strategy specifies the action to be taken by 
the miner depending on the state of the network. The chosen mining strategy, whether honest or not, is repetitive. In due course, 
the miner will almost certainly return to his starting point and mine on the last block of the official blockchain. When this happens, 
the miner is said to have completed a cycle. During this cycle, we note $R$ the number of blocks added by the attacker to the official blockchain 
and $H$ the progression of the height of the official blockchain. In concrete terms, at any given moment, the attacker 
has the choice between mining on his secret fork, overriding the official blockchain when he has the means to do so, or giving 
up and returning to mine on the last block of the official blockchain. In reality, in the general 
case, he has an additional action at his disposal: the action noted as ``Match'' by the authors \cite{SSZ17}, which consists 
in revealing a block already mined but kept secret by the attacker. However, under our assumption $\gamma=0$, this action is not possible. 
Furthermore, the state of the network is normally modeled by a triplet $(a,h,f)$ where $a$ (resp. $h$) designates the number 
of blocks mined by the attacker (resp. honest miners) on the last fork created and $f$ designates the possibility of using the Match action 
or the fact that it is already activated. In the case of $\gamma=0$, the latter parameter is irrelevant. 
The state of the network is simply modeled by a pair $(a,h)$. 

\subsection{The effect of the difficulty adjustment formula}
PnL (Profit and Loss) per unit of time is the only quantity that makes economic sense. The cost of mining per unit of time 
is independent of the mining strategy chosen (keeping blocks secret may have an impact on the miner's income, but not on his 
cost of mining). So, the quantity that allows us to compare different full time mining strategies is the revenue per unit of time. 
This key observation  was made in \cite{GPM2018a}, before only the relative proportion of mined blocks in the official blockchain 
was used, without justification, as a benchmark of profitability of the strategy. Only in the long run, after difficulty adjustments these are equivalent.
A difficulty adjustment formula occurs when the official blockchain grows by $2016$ blocks. This has the effect of 
maintaining an average duration of $10$ minutes each time the height function of the official blockchain increases by $1$. 
In these circumstances, only in the long run, the percentage of blocks mined by the attacker present in the official 
blockchain gives the Revenue in the long run. In concrete terms, when a given mining strategy, or ``minning policy'', is modeled by a Markov chain as in \cite{SSZ17}, 
only when the interblock time stabilizes, the measure of revenue per unit of time  is given by $\frac{\EE[R]}{\EE[H]}$ for integrable strategies, i.e.
those with finite time expectations for the cycles for which $\EE[H]<\infty$.
When the miner mines honestly, this quantity is equal to the miner's relative hash 
power, which we have always denoted $q$. 
This leads to the following proposition (Corollary 10.1 of \cite{SSZ17}, which is a Corollary of the more general Proposition 3.6 of \cite{GPM2018a}). 

\begin{proposition}\label{profit2}
An admissible mining strategy is more profitable than the honest strategy if and only if $\EE[R-q H]>0$.\label{prop1}
\end{proposition}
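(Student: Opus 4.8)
The plan is to reduce the claim to an elementary comparison of long-run revenues per unit of time, using the characterization recalled just above the statement. By the discussion preceding the proposition (following Proposition 3.6 of \cite{GPM2018a}), the only economically meaningful benchmark is the revenue per unit of time, and for an admissible (integrable) strategy this long-run quantity equals the ratio $\frac{\EE[R]}{\EE[H]}$, where $R$ and $H$ denote, respectively, the number of blocks contributed by the attacker and the progression of the height of the official blockchain over one cycle. For the honest strategy this same ratio equals $q$, the miner's relative hash power. Thus ``more profitable than honest'' means precisely
\begin{equation*}
\frac{\EE[R]}{\EE[H]} > q .
\end{equation*}

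Granting this, the remainder is a one-line manipulation. First I would invoke admissibility to guarantee $0 < \EE[H] < \infty$: the official blockchain advances by at least one block over a completed cycle, so $H \geq 1$ and hence $\EE[H] \geq 1 > 0$, while integrability supplies $\EE[H] < \infty$. Because $\EE[H]$ is a strictly positive finite number, multiplying the displayed inequality through by $\EE[H]$ preserves its direction and yields the equivalent statement $\EE[R] > q\,\EE[H]$. Finally, since $q$ is a constant and both $R$ and $H$ are integrable, linearity of expectation gives $\EE[R] - q\,\EE[H] = \EE[R - qH]$, so the condition becomes $\EE[R - qH] > 0$. Each step is an equivalence, which delivers the ``if and only if''.

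The one point requiring care — and hence the main obstacle — is the legitimacy of the multiplication by $\EE[H]$ and of splitting $\EE[R-qH]$ as $\EE[R] - q\,\EE[H]$. Both hinge on the integrability (admissibility) hypothesis: one needs $\EE[H]$ positive and finite for the ratio to be well defined and for the sign of the inequality to be preserved, and one needs $R$ and $H$ integrable so that linearity applies to their difference. For non-admissible strategies the ratio $\frac{\EE[R]}{\EE[H]}$ may fail to represent the long-run revenue, which is why the equivalence is asserted only within the admissible class. With these integrability facts in hand (already built into the definition of an admissible strategy), no further estimate is needed.
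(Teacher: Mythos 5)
Your argument is correct and is essentially the one the paper intends: the proposition is presented as an immediate consequence of the preceding discussion (revenue per unit of time equals $\EE[R]/\EE[H]$ for admissible strategies and equals $q$ for the honest one), with the formal details delegated to the cited references. Your added care about $0<\EE[H]<\infty$ and the linearity step is a welcome but minor elaboration of the same route.
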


As in the previous section, we can interpret the second term (here -$q H$) as a cost. But unlike in the previous section, the miner 
no longer pays $q$ each time a block is discovered (by him or the rest of the network), but only each time the official blockchain progresses. 
This leads us to consider another version of the Heads or Tails game.

\subsection{Another Heads or Tails game }

During the course of the game, the player regularly accumulates chips and can, under certain constraints, convert them into 
cash (euros, let's say). At any given moment, the player has a maximum of three possible actions.

\begin{description}
\item[Toss] A croupier tosses a coin rigged in favor of the bank. The probability of getting Tails is $q$.
\begin{itemize}
\item If the result is Tails, the player wins a chip and pays nothing.
\item  If the result is Heads, the bank wins a chip and the player pays $q$ to the dealer.
\end{itemize}
\item[Crush] This action is only possible if the player (resp. the bank) has $a$ (resp. $h$) chips with $a>h$. In this case, 
the bank loses all its chips, the player loses $h+1$ chips but wins $h+1$ euros and also gives $q$ euros to the dealer. 
His net result is therefore $h+1-q$ euros.
\item[Abandon] The bank and the player lose all their chips. This action costs the player nothing.
\end{description}

There are several differences with the game studied in the previous section. Firstly, the Crush and Abandon actions 
do not end the game. The player can use them and continue playing if he has enough actions available. Secondly, the 
Crush action does not earn exactly $h+1$ euros as before, but $h+1-q$ euros. Last but not least, the player doesn't 
always pay the dealer! This is only the case when the toss is unfavorable, i.e. when the result is Heads and the banker 
wins a chip. If the result is Tails, he wins a chip as before, but gives nothing!

Here again, a few comments are in order.

\begin{itemize}
\item A coin toss by the dealer is equivalent to the discovery of a block by the honest miner or miners.
\item The Toss action is equivalent to the miner choosing to mine secretly and wait for a block to be discovered. 
If the result is Heads, then the official blockchain advances by one block. The height function therefore increases by 1, 
resulting in a cost q paid by the player in this case. If, on the other hand, the result is Tails, the block discovered by the miner is kept secret. 
The height of the official blockchain does not increase. Hence a zero cost.
\item The Crush action means that the miner replaces the last $h$ blocks of the official blockchain with his own. 
For this action to be possible, the miner must reveal one more block ($h+1$ in all), which increases the height of the 
official blockchain by $1$. The miner then gains the reward contained in $h+1$ blocks, and at the same time, the height 
of the official blockchain increases by $1$. Hence a net gain of $h+1-q$.
\item The Abandon action is equivalent to the miner dropping his secret fork and returning to mine on the last block of 
the official blockchain. He neither gains nor loses anything with this action, as the height of the official blockchain remains unchanged.
\end{itemize}

\begin{definition}
Let $JM_2(a,h)$ denote the previous game in which the player starts from a situation in which he owns a chips against the bank, 
which owns $h$ chips. Let $E_2(a,h,n,q)$ also be the maximum payoff expectation of the player starting from an initial situation 
in which he has $a$ chips and the bank has $h$, under the assumption that he has at most $n$ possible actions.
\end{definition}
Proposition \ref{profit2} can be reinterpreted.

\begin{proposition}
There is a more profitable strategy than the honest one if and only if the $JM_2(0,0)$ game is biased in the player's favor.
\end{proposition}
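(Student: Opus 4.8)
The plan is to build an explicit dictionary between the states and moves of the game $JM_2$ and the evolution of the network under a $\gamma=0$ mining strategy, and then to check that the net cash flow of one play of the game is \emph{exactly} the quantity $R-qH$ appearing in Proposition~\ref{profit2}. Concretely, I would identify a position with $a$ player-chips and $h$ bank-chips with a network state $(a,h)$ in which the attacker holds a secret fork of length $a$ while the honest miners have advanced their public chain by $h$ since the fork point. Under this identification a \textbf{Toss} is the attacker waiting for the next block: \textbf{Tails} (probability $q$) is the attacker finding it, so the player gains a chip at no cost; \textbf{Heads} (probability $1-q$) is the honest network finding it, so the official height rises by one and the player pays $q$. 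A \textbf{Crush} is the override move (legal exactly when $a>h$), and an \textbf{Abandon} is the attacker dropping his fork.

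Next I would observe that a mining \emph{cycle}, i.e. the evolution from the reset state $(0,0)$ until the attacker returns to mining on the tip of the official chain, corresponds precisely to a play of $JM_2$ starting at $(0,0)$ and ending the first time a \textbf{Crush} or an \textbf{Abandon} is used, after which both chip stacks are empty and the game is back at $(0,0)$. The heart of the argument is then the accounting identity: summing the cash flows along one cycle, each \textbf{Heads} contributes $+1$ to the official height $H$ and costs $q$, each \textbf{Tails} contributes nothing, a final \textbf{Abandon} contributes nothing (so the payoff is $-qh=R-qH$ with $R=0$ and $H=h$), and a final \textbf{Crush} adds $h+1$ attacker blocks to the official chain and raises its height by one, giving net $h+1-q$ and hence total $(h+1)(1-q)=R-qH$ with $R=H=h+1$. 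In every case the net payoff of the cycle equals $R-qH$.

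With the identity in hand the conclusion is immediate. ``Biased in the player's favor'' means that some admissible game strategy has strictly positive expected net payoff, that is $\sup\EE[R-qH]>0$ over cycle strategies; by Proposition~\ref{profit2} this is exactly the condition that an admissible mining strategy beat the honest one. Since a \textbf{Crush} or an \textbf{Abandon} resets the game to $(0,0)$, the expected payoff over many cycles is additive, so the sign of the per-cycle expectation already decides whether the game is biased, and one may restrict attention to a single cycle.

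I expect the main obstacle to be reconciling the two classes of strategies: the game $JM_2$ is phrased with a bounded number $n$ of moves (hence $E_2(0,0,n,q)$), whereas the strategies of Proposition~\ref{profit2} are only required to be integrable ($\EE[H]<\infty$). To close this gap I would argue that $E_2(0,0,n,q)$ is nondecreasing in $n$ and that its limit equals the supremum of $\EE[R-qH]$ over integrable cycle strategies: any bounded-action strategy is trivially integrable, and conversely truncating an integrable strategy at $n$ moves (forcing an \textbf{Abandon} if it has not yet stopped) yields payoffs converging to those of the original cycle, so by dominated convergence the truncated expectations converge and positivity is preserved for $n$ large. Thus $\sup_n E_2(0,0,n,q)>0$ if and only if $\sup\EE[R-qH]>0$, which completes the equivalence.
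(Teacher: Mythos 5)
Your proposal is correct and follows essentially the same route as the paper: the paper gives no formal proof, presenting the proposition as an immediate reinterpretation of Proposition \ref{profit2} via the block/chip dictionary spelled out in its bullet comments, which is exactly the correspondence you construct. Your per-cycle accounting identity (net cash flow equals $R-qH$ whether the cycle ends in \textbf{Crush} or \textbf{Abandon}) and your care in passing from the bounded-action values $E_2(0,0,n,q)$ to the integrable cycle strategies of Proposition \ref{profit2} make explicit two steps the paper leaves entirely implicit.
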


We have the following relations $(a, h) \in \mathbb{N}^2$ and $q \in [0, 0.5[$,
\begin{eqnarray}
  E_2 (a, h, 0, q) & = & 0 
\end{eqnarray}
For any $n \in \mathbb{N}^{\ast}$ and $a > h$, if the player decides to use the
to use the Crush action, then the state of the network changes from $(a, h)$ to $(a - h - 1, 0)$ and the game continues. Thus,

\[
\begin{aligned}
E_2 (a, h, n, q) = \max \Bigg\{ &(h + 1) - q + E_2 (a - h - 1, 0, n - 1, q), \\
& q \cdot E_2 (a + 1, h, n - 1, q) + (1 - q) \cdot (E_2 (a, h + 1, n - 1, q) - q) \Bigg\}
\end{aligned}
\]

In the case where $n \in \mathbb{N}^{\ast}$ and $a \leqslant h$, the player has the choice between continuing 
to mine (action Launch) or abandoning (action Abandon). This latter action does not end the game but leads the player to the state $(0,0)$ 
with one less action. Therefore, for all $n \in \mathbb{N}^{\ast}$ and $a \leqslant h$,

\[
\begin{aligned}
E_2 (a, h, n, q) & = \max \Bigg\{ E_2 (0, 0, n - 1, q), \\
& q \cdot E_2 (a + 1, h, n - 1, q) + (1 - q) \cdot (E_2 (a, h + 1, n - 1, q) - q) \Bigg\}
\end{aligned}
\]

The advantage of this approach is that we can avoid relying on a Markov decision solver. Below is a very 
simple pseudo-code that precisely provides the maximal expected gain $E_2(a,h,n,q)$ through memoization.

 \bigskip
  {\tt 
  
  function E2(a, h, n, q, memo): 
  
\par \hskip 10pt     if (a, h, n) in memo:
 
\par \hskip 20pt           return memo[(a, h, n)]

\par \hskip 10pt     if n == 0:

\par \hskip 20pt          memo[(a, h, n)] = 0

\par \hskip 10pt          return 0
     
\par \hskip 10pt      if a > h:
\par \hskip 20pt         memo[(a, h, n)] = max((h + 1) - q + E2(a - h - 1, 0, n - 1, q, memo),
\par \hskip 20pt                                q * E2(a + 1, h, n - 1, q, memo) + (1 - q) * (E2(a, h + 1, n - 1, q, 
\par \hskip 20pt         memo) - q))
\par \hskip 10pt      else:
\par \hskip 20pt         memo[(a, h, n)] = max(E2(0, 0, n - 1, q, memo),
\par \hskip 20pt                                q * E2(a + 1, h, n - 1, q, memo) + (1 - q) * (E2(a, h + 1, n - 1, q, 
\par \hskip 20pt        memo) - q))
\par
\par \hskip 10pt      return memo[(a, h, n)]
}
\bigskip

(Maximum average net income $E_2$)

\bigskip

Note that $E_2(0,0,146,0.329393)=4.4530581139179404\times 10^{-8}>0$. 
However, it is impossible to find an integer $n$ such that $E_2(0,0,n,0.329392)>0$. Therefore, the threshold beyond 
which a miner with zero connectivity is incentivized to choose a deviant strategy is approximately $32.94\%$. This result coincides 
with that obtained with the Python implementation of the article \cite{SSZ17}, see \cite{Mitsu}. Note that this threshold is not far 
from the $\frac{1}{3}$ threshold for the classical selfish mining strategy. 
This also suggests that the selfish mining strategy is not far from the optimal strategy for low $q$ even 
if  it doesn't highlight the fact that sometimes the miner may have to continue his attack while lagging behind the official blockchain, see  \cite{KMNS16} and 
\cite{GPM2018b} for the analysis of these Stubborn mining strategies. 

\section{Honesty is the best policy.}\label{honesty}
In this section, we consider the case where the Bitcoin difficulty adjustment formula has been modified, and we 
theoretically demonstrate, based on a variation of the Heads or Tails game, that the best strategy is still the honest strategy when 
$\gamma = 0$. The general result without assumptions on $\gamma$ has also been proven in \cite{GPM23}

\subsection{A new difficulty adjustment formula}
Today, the nodes in the Bitcoin network do not transmit orphaned blocks. But let's imagine that they could. Let's even imagine that 
miners are incentivized to do so by modifying the rule that governs the official blockchain. In the event of a 
tie between two blockchains, we should select the one that contains the most proof of work, taking into account orphaned blocks as 
well, provided that they have an ancestor in the considered blockchain. This would be a kind of reinforcement of Satoshi Nakamoto's rule. 
In case of another tie between two blockchains with the same characteristics, a node would select, as it does today, the one that was transmitted to it first.

In this case, the key quantity for comparing two mining strategies would no longer be 
$\frac{\EE[R]}{\EE[H]}$ but $\frac{\EE[R]}{\EE[D]}$ where $R$ would still represent the number of blocks 
added to the official blockchain by the miner during a cycle and $D$ would represent the progression of the 
so-called difficulty function during the same period (authors in \cite{BET20} introduced the 
concept of ``difficulty function''). We would have $D=H+U$ where $U$ is the number of orphan blocks mined during a 
cycle. The effect of the new difficulty adjustment algorithm would be to impose a duration of $10$ minutes on each progression 
of the difficulty function (instead of the height function as now). As before, this would mean that a mining 
strategy would be more profitable than the honest strategy if and only if $\EE[R-q D]>0$ where $D$ here represents the progression 
of the difficulty function over a cycle, which leads us to consider another game of Heads or Tails.

\subsection{A third Coin Toss game}
During this game, the player regularly accumulates chips that, under certain constraints, they can 
convert into cold hard cash (let's say euros). At any given moment, the player has at most three possible actions:

\begin{description}
\item[Toss] A croupier tosses a coin rigged in favor of the bank. The probability of getting Tails is $q$.
\begin{itemize}
\item If the result is Tails, the player wins a chip and pays nothing.
\item  If the result is Heads, the bank wins a chip and the player pays $q$ to the dealer.
\end{itemize}
\item[Crush] This action is only possible if the player (or the bank) has $a$ (or $h$) chips with $a>h$. In this case, 
the bank loses $h$ chips, the player loses $h+1$ chips and wins $h+1$ euros, but he must also give the dealer $q\,(h+1)$.
Hence, their net result is $(1-q).(h+1)$ euros.
\item[Abandon] The bank and the player lose all their chips. This action costs the player nothing.
\end{description}

\begin{definition}
We denote by $JM_3(a,h)$ the game described above, where the player starts from a situation where they have $a$ 
chips against $h$ for the bank, $(a,h)\in\NN^2$, and let $E_3(a,h,n)$ be the maximal net income of the player playing $JM_3(a,h)$ 
and having at most $n$ actions available.
\end{definition}

The only difference between $JM_2$ and $JM_3$ is the consequence of the Crush action. In $JM_3$, the player earns 
less than in $JM_2$. Indeed, in $JM_3$, the player must pay for the official blockchain to advance by 1 as in $JM_2$, 
but he must also pay for the creation of $h$ orphan blocks (the $h$ blocks of honest miners that have been replaced and 
made visible to all). Hence a cost equal to $q.(h+1)$ as a result of this action and therefore the fact that the gain 
is now only $(1-q).(h+1)$ euros which is clearly less than $h+1-q$ euros as in the previous game.

\begin{proposition}
The $JM_3(0,0)$ game is a fair game.
\end{proposition}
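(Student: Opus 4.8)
The plan is to mirror the conservation (bookkeeping) argument that was used to show $JM_1(0,0)$ is fair, adapting the accounting to the modified payoffs of $JM_3$. Fix any admissible strategy using at most $n$ actions and run the game from $(0,0)$. Let $J$ be the total number of \textbf{Toss} actions, split as $J=J_T+J_H$ into Tails and Heads outcomes. Every Tails creates one player chip and costs nothing; every Heads creates one bank chip and costs the player $q$ euros; the \textbf{Abandon} action merely destroys chips, for free. Hence the only euro-earning event is \textbf{Crush}, and the player's total net income along a play is
\[
N \;=\; (1-q)\,C \;-\; q\,J_H ,
\]
where $C$ is the total number of player chips consumed by \textbf{Crush} actions: a crush with the bank holding $h$ chips destroys $h+1$ player chips and returns $(1-q)(h+1)$ euros, i.e.\ exactly $1-q$ euros per player chip removed.

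The key inequality is $C \le J_T$: each player chip is created exactly once (by a Tails) and destroyed at most once, so the number of player chips ever crushed cannot exceed the number $J_T$ ever produced. This gives $N \le (1-q)J_T - q\,J_H$ pointwise. Taking expectations and using that each toss is Tails with probability $q$ independently of the (past-measurable) decision to toss, so that $\EE[J_T]=q\,\EE[J]$ and $\EE[J_H]=(1-q)\,\EE[J]$ by Wald's identity — valid since $J\le n$ is bounded — yields
\[
\EE[N] \;\le\; (1-q)\,q\,\EE[J] - q\,(1-q)\,\EE[J] \;=\; 0 .
\]
Thus $E_3(0,0,n,q)\le 0$ for every strategy, while the strategy that abandons immediately realises $N=0$; combining the two bounds gives $E_3(0,0,n,q)=0$, i.e.\ $JM_3(0,0)$ is fair.

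The step I expect to require the most care is the per-chip valuation of the \textbf{Crush} action together with the bound $C\le J_T$: one must verify that the $(1-q)(h+1)$ euros of a crush are correctly attributed at the uniform rate $1-q$ per player chip removed, and that player chips are neither double-counted nor created outside of Tails, so that the aggregate euros from crushing is at most $(1-q)J_T$. Once this accounting is pinned down, the identity $\EE[J_T]=q\,\EE[J]$ makes the favorable term $(1-q)\EE[J_T]$ and the unfavorable term $q\,\EE[J_H]$ cancel exactly, in complete parallel with the $JM_1(0,0)$ computation; the essential new feature is simply that in $JM_3$ a chip is worth $1-q$ euros on conversion rather than $1$.
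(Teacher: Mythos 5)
Your proposal is correct and follows essentially the same chip-accounting argument as the paper: the paper's $J_1, J_2$ are your $\EE[J_T], \EE[J_H]$, its observation that the player earns at most $p J_1$ euros (converting chips at rate $1-q$ each) matches your bound $(1-q)C \le (1-q)J_T$, and the cancellation $pJ_1 = qJ_2$ is exactly your final display. You merely make explicit two points the paper leaves implicit — the inequality $C\le J_T$ and the Wald-type identity justifying $\EE[J_T]=q\,\EE[J]$ for a strategy-dependent number of tosses — which is a welcome tightening rather than a different route.
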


\begin{proof}
Suppose the player has chosen a strategy and let $J_1$ (resp. $J_2$) be the average number of chips received in total 
by the player (resp. the bank) while participating in the game. As the coin tossed regularly by the dealer is biased in favor 
of the bank, we have on average: $J_1=q J$ and $J_2=p J$ with $p=1-q$ and $J=J_1+J_2$. Therefore, $p J_1=q J_2$. 
Now, $p J_1$ is the maximum sum received by the player in euros (i.e. the most he wins in the ultra-favorable case where he 
manages to convert all his chips into euros) and $q J_2$ is the cost in euros paid by the player. Therefore, the 
player's maximum net gain expectation is zero. 
\end{proof}
We can be more precise and show the following result.
\begin{theorem}
For all integers $a, h,n$, we have $E_3 (a, h, n) \leqslant p \cdot a$ with $p  = 1 - q$.
\end{theorem}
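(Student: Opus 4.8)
The plan is to prove the bound by induction on the number of available actions $n$, using the dynamic-programming recursion for $E_3$ dictated by the game's rules (exactly analogous to the one displayed for $E_2$, but with the modified Crush payoff $p(h+1)$ in place of $h+1-q$). Writing $p=1-q$ and abbreviating the value of the Toss action by
\[
T(a,h,n) := q\,E_3(a+1,h,n-1) + p\bigl(E_3(a,h+1,n-1) - q\bigr),
\]
the recursion reads $E_3(a,h,0)=0$, and for $n\geq 1$: if $a>h$ then $E_3(a,h,n)=\max\{\,p(h+1)+E_3(a-h-1,0,n-1),\ T(a,h,n)\,\}$, while if $a\leq h$ then $E_3(a,h,n)=\max\{\,E_3(0,0,n-1),\ T(a,h,n)\,\}$. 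The claim to establish is $E_3(a,h,n)\leq p\,a$ for all $a,h,n\in\NN$.

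The base case $n=0$ is immediate, since $E_3(a,h,0)=0\leq p\,a$ (as $p>0$ and $a\geq 0$). For the inductive step I would assume $E_3(a',h',n-1)\leq p\,a'$ for all $a',h'$ and bound each branch of the maximum separately. First the Crush action (available when $a>h$, so that $a-h-1\geq 0$): the induction hypothesis gives
\[
p(h+1)+E_3(a-h-1,0,n-1)\ \leq\ p(h+1)+p(a-h-1)\ =\ p\,a,
\]
the telescoping being exact. Next the Toss action, which is where the constraint $p+q=1$ does all the work:
\begin{align*}
T(a,h,n) &\leq q\,p(a+1) + p\bigl(p\,a - q\bigr)\\
&= qpa + qp + p^2 a - pq\\
&= pa\,(q+p) = p\,a.
\end{align*}
Finally, in the regime $a\leq h$ the Crush branch is replaced by Abandon, whose value $E_3(0,0,n-1)\leq p\cdot 0 = 0\leq p\,a$ lies trivially within the bound, while the Toss estimate above is unchanged. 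Since every branch of the maximum is $\leq p\,a$, so is $E_3(a,h,n)$, completing the induction.

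I do not expect a serious obstacle: the statement is a clean induction, and the two nontrivial inequalities both collapse exactly to $p\,a$ because of cancellations built into the payoffs (the $p(h+1)$ gain of Crush exactly offsets the $p(h+1)$ worth of chips surrendered, and the $-q$ cost in the Heads branch of Toss is precisely absorbed via $p+q=1$). The only point demanding care is the bookkeeping: writing the recursion faithfully from the verbal description—in particular placing the $-q$ cost only on the Heads branch of Toss and using the correct Crush net payoff $p(h+1)=(1-q)(h+1)$—and then invoking the induction hypothesis at the shifted arguments $(a+1,h)$, $(a,h+1)$ and $(a-h-1,0)$. Should one also allow Abandon when $a>h$, the argument is unaffected, since its value $E_3(0,0,n-1)\leq 0\leq p\,a$ never exceeds the bound.
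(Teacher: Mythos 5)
Your proof is correct and follows essentially the same route as the paper's: induction on $n$, bounding the Toss branch by $qp(a+1)+p(pa-q)=pa$ via $p+q=1$, the Crush branch by the telescoping $p(h+1)+p(a-h-1)=pa$, and the Abandon branch by $0\leqslant pa$. The recursion you reconstruct for $E_3$ matches the one the paper uses, so there is nothing to add.
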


\begin{proof}
The result is true if $n=0$. Assume it to be true at rank $n-1\geqslant 0$. Then,
\begin{equation*}
qE_3 (a + 1, h, n - 1) + p \cdot (E_3 (a, h + 1, n - 1) - q) \leqslant q \cdot p \cdot (a + 1) + p \cdot (p \cdot a - q) = p \cdot a
\end{equation*}
  So, if $a > h$,

\begin{equation*}
\begin{aligned}
E_3 (a, h, n) & = \max \Bigg\{ (h + 1) \cdot (1 - q) + E_3 (a - h - 1, 0, n - 1), \\
& \quad qE_3 (a + 1, h, n - 1) + (1 - q) \cdot (E_3 (a, h + 1, n - 1) - q) \Bigg\} \\
& \leqslant \max \{ p (h + 1) + p (a - h - 1), pa \} = pa
\end{aligned}
\end{equation*}

  and if $a \leqslant h$,
\begin{equation*}
\begin{aligned}
E_3 (a, h, n) & = \max \Bigg\{ E_3 (0, 0, n - 1), \\
& \quad qE_3 (a + 1, h, n - 1) + (1 - q) \cdot (E_3 (a, h + 1, n - 1) - q) \Bigg\} \\
& \leqslant \max \{ 0, pa \} = pa
\end{aligned}
\end{equation*}

  Hence we get the result.
\end{proof}
Hence the corollary which confirms the previous result
\begin{corollary}
  For all $n\in\NN$, $E_3 (0, 0, n, q) = 0$.
\end{corollary}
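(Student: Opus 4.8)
The plan is to establish the corollary by squeezing $E_3(0,0,n,q)$ between two equal bounds. First I would obtain the upper bound for free: the preceding Theorem asserts $E_3(a,h,n) \leqslant p\cdot a$ for all integers $a,h,n$, so specializing to $a=0$ and $h=0$ gives immediately
\[
E_3(0,0,n,q) \leqslant p\cdot 0 = 0 .
\]
No further work is needed here, since the Theorem is already proved and may be invoked directly.

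For the matching lower bound I would argue that the player can always guarantee a net income of zero. Starting from the state $(0,0)$ we are in the regime $a\leqslant h$ (indeed $0\leqslant 0$), so the recurrence reads
\[
E_3(0,0,n) = \max\bigl\{ E_3(0,0,n-1),\; q\,E_3(1,0,n-1) + (1-q)\,(E_3(0,1,n-1)-q) \bigr\} .
\]
Because a maximum always dominates its first argument, this yields $E_3(0,0,n) \geqslant E_3(0,0,n-1)$. Equivalently, one may read the same fact off the rules of the game: the \textbf{Abandon} action available at $(0,0)$ costs the player nothing and returns the game to the state $(0,0)$ with one action fewer, so any payoff achievable with $n-1$ actions is still achievable with $n$. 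Iterating this inequality down to the base case $E_3(0,0,0)=0$ gives $E_3(0,0,n)\geqslant 0$ for every $n$.

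Combining the two bounds gives $0 \leqslant E_3(0,0,n,q) \leqslant 0$, hence $E_3(0,0,n,q)=0$, which is the claim. I do not expect any genuine obstacle here, as the statement is a direct consequence of the Theorem together with an elementary monotonicity observation. The only point requiring a little care is the lower bound, where one must justify that the value zero is actually attainable; this follows either from the monotonicity $E_3(0,0,n)\geqslant E_3(0,0,n-1)$ read off the recurrence, or, at the level of the game, from the fact that immediately abandoning produces a net income of exactly zero, so the maximum expected gain from $(0,0)$ can never be negative.
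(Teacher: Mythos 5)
Your proof is correct and follows essentially the same route as the paper, which simply derives the corollary from the theorem ($E_3(0,0,n)\leqslant p\cdot 0=0$) and leaves the lower bound implicit. Your explicit justification that $E_3(0,0,n)\geqslant E_3(0,0,n-1)\geqslant\cdots\geqslant E_3(0,0,0)=0$ via the first argument of the max (or the free Abandon action) is a welcome tightening of a step the paper glosses over.
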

Thus, the game of $JM_3(0,0)$ modified mining is unbiased, corresponding to the fact that the best mining 
strategy is the honest one. The result has been demonstrated for $\gamma=0$ but can also be demonstrated for any $\gamma$ using 
more powerful tools with martingales \cite{GPM23}.

\section{Conclusion}
In this article, we aim to calculate different thresholds in terms of relative hash power, 
beyond which a miner might be tempted to engage in a deviant mining strategy. In each case, we use a simple 
coin-toss model. Each time, we calculate the player's maximum expected payoff under the constraint of a limited 
number of possible actions, using a very simple script. The results are classic. This approach also allows us to 
qualitatively understand why the current difficulty adjustment formula in Bitcoin is flawed and opens the door to potential 
attacks. It also demonstrates how this problem can be corrected.

\end{document}